\documentclass[runningheads]{llncs}
\usepackage[utf8]{inputenc}
\usepackage{amssymb}
\usepackage{comment}
\usepackage{graphicx}
\usepackage[ruled,vlined,linesnumbered]{algorithm2e}
\usepackage[font=small]{caption}
\def\QED{\hfill$\Box$\par\vskip1em}

\title{Blockchain in Dynamic Networks}

\author{Rachel Bricker \and Mikhail Nesterenko \and Gokarna Sharma}
\institute{Kent State University, Kent, OH, 44242, USA\\
\email{rbricke2@kent.edu}, \email{mikhail@cs.kent.edu}, and \email{gsharma2@kent.edu}}


\begin{document}
\sloppy
\maketitle
\begin{abstract}
We consider blockchain in dynamic networks. We define the Blockchain Decision Problem. It requires miners that maintain the blockchain to confirm whether a particular block is accepted. We establish the necessary conditions for the existence of a solution. We, however, prove that the solution, even under these necessary conditions is, in general, impossible. We then present two algorithms that solve the Blockchain Decision Problem under either the knowledge of the maximum source pool propagation time or the knowledge of the source pool membership. We evaluate the performance of the two algorithms.  
\keywords{Dynamic Networks \and Blockchain}
\end{abstract}

\section{Introduction}
Blockchain is a means of organizing a decentralized public ledger. The lack of centralized controller potentially makes the blockchain more resilient to network failures and attacks. Blockchain is a popular architecture for a number of applications such as cryptocurrency~\cite{nakamoto,ethereum}, massive Internet-of-Things storage~\cite{iotchain} and electronic voting~\cite{grontas2019blockchain}.

The major problem of maintaining this ledger is for the participants to achieve consensus on its records despite faults or hostile environment. Classic robust consensus algorithms~\cite{pbft,byzantine} use cooperative message exchanges between peer processes to arrive at a joint decision. However, such algorithms require that each process is aware of all the other processes in the network. In a system with flexible membership, such requirement may be excessive. 
An alternative is competitive consensus~\cite{nakamoto} where processes race to have records that they generated added to the blockchain. This competition does not require fixed membership and provides defense against attacks and faults. 

Ordinarily, the network underlying the blockchain is considered to be always connected. 
However, as blockchain finds greater acceptance and new applications, this assumption may no longer be considered as
a given. Instead, the blockchain operation under less reliable communication conditions needs to be examined. 

A dynamic network assumes that a connection between any two processes may appear and disappear at any moment. Therefore, the network connectivity graph changes arbitrarily from one point of the computation to the next. This is the least restrictive network connectivity assumption. This paper studies operation of blockchain in dynamic networks. 

\ \\
\textbf{Related work.} 
%
%
An area related to dynamic networks is population protocols, where passive agents do not control their movement but may exchange information as they encounter each other. See Michail et al.~\cite{michail2011new} for an introduction to the topic.
A system with arbitrary link failures was considered by Santoro and Widmayer~\cite{santoro1989time}. There are several papers that explore the model of link failures in greater detail~\cite{afek2013asynchrony,charron2009heard,coulouma2015characterization}.

The network that dynamically changes in an arbitrary manner, possibly to the detriment of the problem to be solved, was first formally studied by O'Dell and Wattenhofer~\cite{o2005information}. This topic is explored in Kuhn et al.~\cite{kuhn2010distributed}.
Several studies~\cite{bonomi2018reliable,guerraoui2021dynamic} investigate reliable broadcast in dynamic networks with Byzantine faults. There is a large body of literature on cooperative consensus in dynamic networks~\cite{biely2012agreement,kuhn2011coordinated,winkler2019consensus}. In particular, Winkler et al.~\cite{winkler2019consensus} explored the concept of an eventually stably communicating root component necessary for consensus. This is similar to the concept of source communication pool that we introduce in this paper. 

There are some applied studies~\cite{cong2021dtnb,hu2019delay} that consider the operation of blockchain that tolerates extensive delays or temporary disconnections. Hood et al.~\cite{hood2021partitionable} explored in detail the blockchain operation under network partitioning.  

However, to the best of our knowledge, this paper is the first to study blockchain in dynamic networks.  


\ \\
\textbf{Paper organization and contribution.} In Section~\ref{secNotation}, we introduce the notation and state the Blockchain Decision Problem for dynamic networks: every network miner needs to confirm the acceptance of each block. In Section~\ref{secDecisive}, we establish the conditions for blockchain and the dynamic network so that the problem is at all solvable: there needs to be a single source pool of continuously interacting miners that propagate the blocks they generate to the rest of the network and none of the other miners may generate infinitely many blocks and propagate them back to the source pool.  In Section~\ref{secImpossibility}, we prove that in general, even if these conditions are met, the problem is impossible to solve. Intuitively, miners may not determine when these outside blocks stop coming. In Section~\ref{secSolutions}, we present two algorithms that solve the problem with restrictions: \emph{KPT} -- if maximum message propagation time is known to all miners, \emph{KSM} -- if source pool membership is known to all miners. We evaluate the performance of the two algorithms in Section~\ref{secPerformance}. We conclude the paper by Section~\ref{secEnd}.

\vspace{-2mm}

\section{Notation, Definitions and Problem Statement}
\label{secNotation}
\textbf{Network.} A network  $N$ consists of a fixed number of processes or \emph{miners}. Each miner has a unique identifier which may or may not be known to the other miners in the beginning. The network computation proceeds in synchronous rounds. 
Miners communicate via message passing over uni-directional links connecting the sender miner $m_s$ and receiver miner $m_r$. This is denoted as $m_s \rightarrow m_r$.
The network is dynamic as links may appear or disappear. More specifically, at the beginning of each round $i$, the receiver miner receives all messages sent to it during the previous round, then carries out calculations and submits messages over the links that exist in round $i$ to be received in the next round.
A \emph{computation} is a, possibly infinite, sequence of such rounds. 

To simplify the presentation, we first assume that the communication is instantaneous. That is, all the information sent over the link is received in the same round. We also assume that arbitrary amount of information may be communicated in one message. We relax these assumptions later in the paper. 

A \emph{journey} in a computation is a sequence of miners and communication links $m_1 \rightarrow \cdots \rightarrow m_i \rightarrow m_{i+1} \rightarrow m_{i+2} \rightarrow \cdots \rightarrow m_x$ such that each round $i$ of the computation, where link $m_i \rightarrow m_{i+1} $ exists, precedes the round with link $m_{i+1} \rightarrow m_{i+2}$. 
\emph{Journey time} is the number of computation rounds between the first and last link in the journey. Note that journey time may be greater than the total number of links in the journey since it may take more than one round for each subsequent link in the journey to appear. 


%

\begin{figure}[htbp]
\vspace{-4mm}
   \centering
   \includegraphics[width=0.73\columnwidth,angle=-90,scale=0.28]{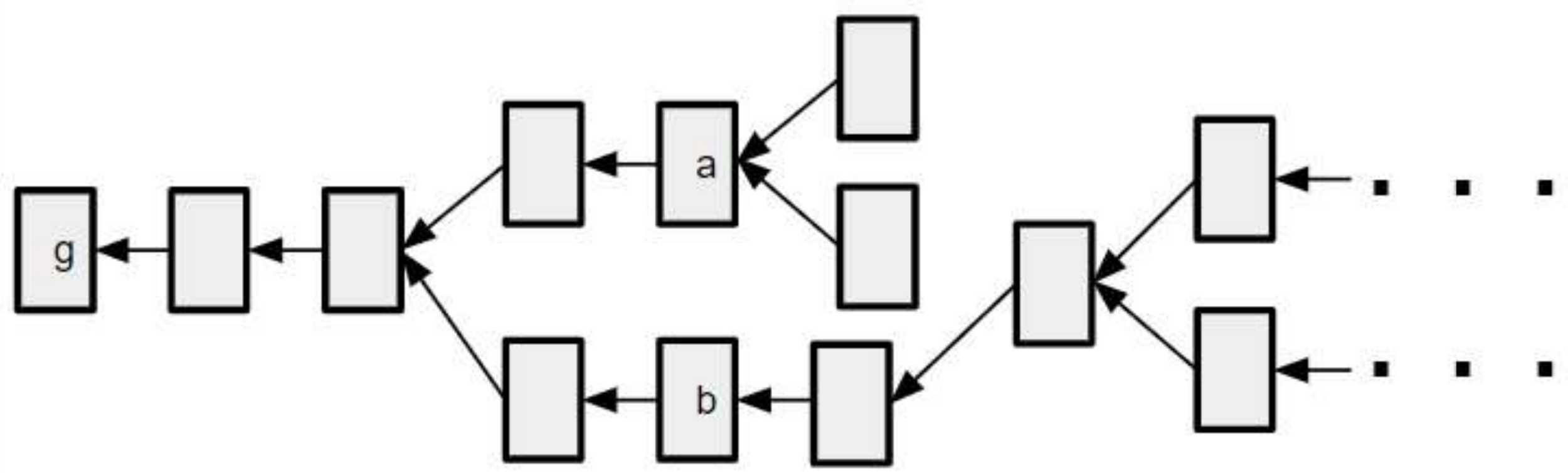}
   \caption{Blockchain notation illustration. Block $g$ is genesis. Block $a$ is rejected because it is an ancestor to only finitely many blocks. Alternatively, $a$ belongs to finite branches only. Block $b$ is accepted since it is an ancestor to infinitely many blocks. That is, $b$ belongs to infinite branches. Block $a$ is a cousin of block of $b$. In an infinite blockchain, $b$ is accepted if all its cousin branches are finite.}
   \label{figBlockchain}
   \vspace{-4mm}
\end{figure}

\noindent
\textbf{Blockchain.} The introduced terms are illustrated in Figure~\ref{figBlockchain}. \emph{Blockchain} is a tree of linked \emph{blocks}.
Each mined block is unique and can be distinguished from the others. Further block contents
is immaterial. A block may be linked to a single \emph{parent block}. A \emph{child} is a block linked to a parent. 
\emph{Genesis} is the root of the tree and the only block without a parent. 
A \emph{leaf} is the block with no children. 
An \emph{ancestor} of a block $b$ is either a parent of $b$ or, recursively, an ancestor of $b$. A \emph{descendant} of a block $b$ is any block whose ancestor is $b$. The \emph{depth} of a block is the number of its ancestors.

A \emph{branch} is the maximal sequence of blocks $b_1, \cdots, b_i, b_{i+1}, \cdots$ such that $b_1$ is the genesis and for each $i$, $b_i$ is the parent of $b_{i+1}$. By this definition, either a branch is infinite or it ends with a leaf. Given a block $b$ that belongs to a branch, all blocks preceding $b$ in this branch are its ancestors and all blocks following $b$ are its descendants. The \emph{length} of a finite branch is the depth of its leaf. The length of an infinite branch is infinite.

A \emph{trunk} of two branches is their longest shared prefix. Thus, the trunk of any two branches is at least the genesis. A branch is a trunk of itself. The blocks of the trunk belong to the branches that share this trunk. Consider a block $b$ that does not belong to the shared trunk of the two branches. \emph{Cousins} of $b$ are the blocks that belong to these branches but neither descendants nor ancestors of $b$. These blocks belong to a \emph{cousin branch}.

Each miner in the network stores all the blockchain blocks known to it. That is, miners maintain local copies of the blockchain. In the beginning of each computation, each miner stores the same genesis.  Due to the haphazard link appearance in a dynamic network, local copies of the blockchain may be out of sync. 

In an arbitrary round, a miner $m$ may generate or \emph{mine} a new block $b$ linked to the longest branch of the local copy of the blockchain. If $m$ has several branches of the same length, the new block may be mined on any one of them. Multiple processes may mine blocks in the same round. Once linked, the sender sends its entire copy of the blockchain to the receiver. We discuss how to limit the amount of transmitted information later in the paper.  
%
%
By this operation, the number of children for any block, i.e. the arity of the blockchain, is at most $|N|$. 

We place few assumptions on the relationship between the relative speed of communication and block mining. However, we assume the following fairness: throughout the computation, a miner either receives infinitely many new blocks or mines infinite many blocks itself. A miner subject to this assumption is a \emph{fair miner}, it is an \emph{unfair miner} otherwise.

\ \\
\textbf{The Blockchain Decision Problem.} A block is \emph{accepted} if it is the ancestor of all but finitely many blocks. A block is \emph{rejected} if it is the ancestor of finitely many blocks.

In the attempt to agree on the common state of the blockchain, each miner decides whether the block is accepted by outputting a \emph{confirm} decision. The decision about block rejection is implied and is not required. To arrive at this decision, the miners may store and exchange arbitrary information. We use the term computation for block mining and blockchain maintenance as well as for the operation of the algorithm that allows the miners to output decision about the blocks of this blockchain. We formulate the  decision problem as follows. 

\begin{definition}[The Blockchain Decision Problem \emph{BDP}]
A solution to the Blockchain Decision Problem satisfies the following properties:
\vspace{-2mm}
\begin{description}
 \item[Decision:] each miner eventually confirms every accepted block; 
 \item[Confirmation Validity:] each miner confirms only accepted blocks.
\end{description}
\end{definition}

\vspace{-2mm}
\section{Decisive Computations}
\label{secDecisive}
\textbf{Globally decisive computations.} A computation is \emph{globally decisive} if every block of its blockchain is uniquely categorized: either accepted or rejected but not both at once.

\begin{lemma}\label{lemSingleInfBranch}
The blockchain of a globally decisive computation has exactly one infinite branch. 
\end{lemma}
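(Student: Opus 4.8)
The plan is to prove the two halves of ``exactly one'' separately: first that at least one infinite branch must exist, and then that two distinct infinite branches cannot coexist. I would begin the first half by arguing that the blockchain of a globally decisive computation contains infinitely many blocks. Suppose for contradiction that it is finite. Then the genesis is an ancestor of only finitely many blocks, so it is rejected; but in a finite tree the set of blocks that are \emph{not} descendants of the genesis is also finite, so the genesis is simultaneously an ancestor of all but finitely many blocks, i.e.\ accepted. This makes one block both accepted and rejected, contradicting global decisiveness. (Alternatively, one can reach infiniteness from the fairness assumption, since a fair miner mines or receives infinitely many blocks.) The blockchain is a tree rooted at the genesis whose arity is bounded by $|N|$, hence it is finitely branching, and K\"onig's lemma then yields an infinite root-to-leaf path, that is, at least one infinite branch.

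For the second half I would assume toward a contradiction that there are two distinct infinite branches $B_1$ and $B_2$ and examine their trunk. Let $t$ be the deepest block of this shared trunk; since $B_1 \ne B_2$ the branches diverge at $t$, so $t$ has two distinct children $c_1$ and $c_2$ with $c_1$ on $B_1$ and $c_2$ on $B_2$. The crux is to categorize $c_1$ under global decisiveness. On one hand, $c_1$ is an ancestor of every block of $B_1$ beyond it, and $B_1$ is infinite, so $c_1$ has infinitely many descendants and cannot be rejected; by decisiveness it must therefore be accepted. On the other hand, the blocks of $B_2$ lying beyond $t$ are cousins of $c_1$ (neither its ancestors nor its descendants), and there are infinitely many of them because $B_2$ is infinite. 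Hence $c_1$ is not an ancestor of all but finitely many blocks, so it is not accepted. This contradiction shows $B_1$ and $B_2$ cannot both be infinite.

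Combining the two halves gives exactly one infinite branch. I expect the main force of the lemma to lie in the second half: the argument turns entirely on using global decisiveness to rule out a ``neither'' block, namely one such as $c_1$ that has infinitely many descendants yet also infinitely many non-descendants. The first half is largely bookkeeping, the only subtlety being the justification that the blockchain is infinite so that the finitely-branching hypothesis of K\"onig's lemma is actually invoked correctly; I would double-check the finite-case edge argument and the bounded-arity claim before relying on that lemma.
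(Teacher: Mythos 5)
Your proof is correct and follows essentially the same route as the paper's: the no-infinite-branch (finite blockchain) case yields a block that is simultaneously accepted and rejected, and two infinite branches yield a block (your $c_1$, the paper's off-trunk block $b$) that is neither, contradicting global decisiveness in each case. The only substantive difference is that you make explicit the K\"onig's-lemma step (bounded arity plus infinitely many blocks implies an infinite branch), which the paper leaves implicit in its claim that a blockchain with no infinite branch has finitely many blocks.
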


To put the lemma another way: in a decisive computation, all branches except for one are finite.

\begin{proof}
The blockchains that do not conform to the conditions of the lemma either have no infinite branches or have more than one.
If a blockchain does not have infinite branches at all, then it has a finite number of blocks. In this case, every block $b$ is the ancestor of finitely many blocks. That is, $b$ is rejected. However, $b$ is also an ancestor of all but finitely many blocks. That is, $b$ is also simultaneously accepted. In a globally decisive computation, a block may be either accepted or rejected but not both. 

Let us consider the second case of a blockchain not conforming to the conditions of the lemma: it has multiple infinite branches. Let block $b$ belong to one such branch but not to the shared trunk of all the branches. Since $b$ belongs to an infinite branch, it is an ancestor to an infinite number of blocks. Therefore, $b$ is not rejected. However, there are infinite number of blocks in the infinite cousin branches, i.e. the branches to which $b$ does not belong. That is, $b$ is not an ancestor to an infinite number of blocks. Hence, $b$ is not accepted either. 

That is, a blockchain with unique categorization of acceptance of rejection has exactly one infinite branch. A computation must have such a blockchain to be decisive. The lemma follows. 
\QED\end{proof}
\vspace{-2mm}
\ \\
\textbf{Mining pools.} In a certain computation, a \emph{mining pool} $M$ is a maximal set of miners such that each miner $m \in M$ has an infinite number of journeys to every other miner in $M$. That is, each miner in a pool is reachable from every other miner in this pool infinitely often. If, for some miner $m$, there are no other miners that are mutually reachable infinitely often, then $m$ forms a pool by itself. 



A pool graph $\cal PG$ for a computation $C$ is a static directed graph formed as follows. Each node in $\cal PG$ corresponds to a mining pool in $C$. An edge from node $P_1 \in \cal PG$ to node $P_2 \in \cal PG$ exists if there is an infinite number of edges from miners of pool $P_1$ to the miners of the pool $P_2$. 

Let us observe that any pool graph $\cal PG$ is a DAG. 
Indeed, if there is cycle in $\cal PG$, then any miner $m_1$ has an infinite number of journeys to any other miner $m_2$ in this cycle. Since mining pools are maximal, these miners belong to the same pool. 
If $\cal PG$ has a path from pool $P_1$ to pool $P_2$, then any miner $m_1 \in P_1$ has an infinite number of journeys to any miner $m_2 \in P_2$. If it does not, then the number of journeys between $P_1$ and $P_2$ is finite.

A node in a  static graph is a source if it has no incoming edges. Since a DAG has no cycles, it has at least one source. A \emph{source pool} is a pool that corresponds to a source in $\cal PG$. 
An infinite branch \emph{belongs} to a pool if it contains a suffix of blocks where every block is mined by a member of this pool. 

\begin{lemma}\label{lemSinglePool}
If the blockchain of a computation contains an infinite branch, this branch belongs to a single pool.
\end{lemma}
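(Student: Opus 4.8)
The plan is to locate the single pool by tracking which miners mine the blocks of the given infinite branch $b_1 \rightarrow b_2 \rightarrow \cdots$, where $b_1$ is the genesis. Since the branch is infinite while $|N|$ is finite, the pigeonhole principle yields a nonempty set $S$ of miners, each mining infinitely many blocks of this branch. First I would argue that, from some depth onward, the entire branch is mined by members of $S$: a miner outside $S$ mines only finitely many of the branch's blocks, and there are finitely many such miners, so beyond some finite depth $D$ every block of the branch is mined by a member of $S$. This suffix is the candidate witness that the branch belongs to the pool containing $S$; what remains is to show that $S$ lies in one pool and that no second pool qualifies.

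The heart of the argument is showing that any two miners $m, m' \in S$ reside in the same pool, i.e. each reaches the other along infinitely many journeys. Here I would exploit the mining and propagation rules: a block mined by $m$ is unique and can reach another miner only by being copied along a journey, and mining a descendant of a block requires first holding that block as an ancestor. Concretely, whenever $m'$ later mines a block $b_j$ that is a strict descendant of a block $b_i$ previously mined by $m$, the miner $m'$ must have received $b_i$, which forces a journey from $m$ to $m'$ spanning the rounds between the two minings. Since $m$ mines blocks $b_{i_1}, b_{i_2}, \ldots$ at rounds tending to infinity and $m'$ mines arbitrarily deep blocks, for each $b_{i_k}$ there is such a later descendant mined by $m'$, yielding a journey $m \rightarrow m'$ that starts no earlier than the round of $b_{i_k}$. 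As these start rounds are unbounded, there are infinitely many journeys $m \rightarrow m'$, and by symmetry infinitely many $m' \rightarrow m$. Hence $m$ and $m'$ are mutually reachable infinitely often and, since pools are maximal such sets, they coincide in one pool $P$.

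With $S \subseteq P$ established, the suffix beyond depth $D$ is mined entirely by members of $P$, so the branch belongs to $P$. For the uniqueness claimed by the lemma, I would first note that mutual reachability infinitely often is an equivalence relation --- transitivity follows by concatenating a journey into an intermediate miner with a later journey out of it --- so the pools partition $N$ and are pairwise disjoint. If the branch also belonged to a distinct pool $P'$, its witnessing suffix would consist of blocks mined by members of $P'$; but that suffix overlaps our depth-$D$ suffix in infinitely many blocks, each mined by a member of $S \subseteq P$. Since a block has a unique miner, some miner would lie in both $P$ and $P'$, contradicting disjointness. Therefore $P$ is the unique pool, and the lemma follows.

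I expect the main obstacle to be the propagation step of the second paragraph: making precise that $m'$ cannot obtain $m$'s block $b_i$ except through a genuine journey from $m$, which rests on blocks being unique, propagated only by whole-blockchain transmission, and required as ancestors before any descendant can be mined. The counting of journeys by unbounded start rounds, together with the equivalence-relation and partition facts, I expect to be routine once that step is secured.
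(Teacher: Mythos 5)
Your proof is correct, and its crux --- that a miner can only mine a descendant of another miner's block after receiving that block, so ancestry across distinct miners forces a journey starting no earlier than the ancestor's mining round --- is exactly the step the paper's proof also turns on. The surrounding structure, however, is genuinely different. The paper argues by contradiction: if the branch contained infinitely many blocks mined in two distinct pools $P_1$ and $P_2$, then, because the pool graph is a DAG, some direction (say $P_1$ to $P_2$) admits only finitely many journeys, the last ending at some round $r$; yet a block mined in $P_1$ after $r$ with a descendant mined in $P_2$ forces a journey from $P_1$ to $P_2$ after $r$, a contradiction. You never invoke the pool graph or its DAG property at all: you directly construct, for any two miners $m, m'$ each mining infinitely many blocks of the branch, infinitely many journeys in both directions, and conclude from maximality (together with transitivity of mutual infinite reachability) that they lie in one pool. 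Your route is constructive rather than by contradiction, and it makes explicit two points the paper glosses over: the pigeonhole argument showing that failure of the lemma yields two pools each mining infinitely many blocks of the branch, and the uniqueness of the pool, which you get from disjointness of pools as equivalence classes. What the paper's route buys is brevity: the asymmetry it needs (finitely many journeys in at least one direction between distinct pools) comes prepackaged in the DAG observation already established in the text, whereas you re-derive the same-pool conclusion from first principles.
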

\begin{proof} 
Assume that there is a computation $C$ whose blockchain has an infinite branch $BR$ that does not belong to a single pool. That is, branch $BR$ contains infinitely many blocks mined by miners in at least two separate pools $P_1$ and $P_2$. The pool graph $\cal PG$ of $C$ contains no cycles. That means that there is a path from one pool to the other but not back. Suppose, without loss of generality, that there is no path from $P_1$ to $P_2$. This means that there is a finite number of journeys from miners of $P_1$ to $P_2$ in $C$. 
Let round $r$ be the last round of $C$ that contains the journey from $P_1$ to $P_2$. 
However, there are infinitely many blocks in $BR$ that are mined by miners in $P_1$ and in $P_2$. Consider two blocks $b_1$ and $b_2$ of $BR$ mined after round $r$ such that $b_1$ is mined by miner $m_1 \in P_1$ and $b_2$ by miner $m_2 \in P_2$. Moreover, $b_1$ is the ancestor of $b_2$. If this is the case, there is a journey from $m_1$ to $m_2$ in $C$. However, we assumed that there are no such journeys after round $r$ in $C$. That is, our assumption is incorrect and $C$ does not exist. This proves the lemma.
\QED\end{proof}

\vspace{-2mm}
\ \\ 
\textbf{Locally decisive computations.}
A computation is \emph{locally decisive} if it is globally decisive and each miner receives every accepted block.

\begin{lemma}\label{lemInfSource}
In a locally decisive computation, infinite branches belong to a source pool. 
\end{lemma}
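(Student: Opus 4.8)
The plan is to reduce the statement to the two preceding lemmas and then exploit the extra power of local decisiveness, namely that \emph{every} miner receives \emph{every} accepted block. First I would invoke Lemma~\ref{lemSingleInfBranch}: a locally decisive computation is in particular globally decisive, so its blockchain has exactly one infinite branch $BR$, and there is really only one branch to analyze. By Lemma~\ref{lemSinglePool}, $BR$ belongs to a single pool $P$, and the goal becomes showing that $P$ is a source pool. The bridge between the combinatorial notion of ``source'' and the operational notion of ``locally decisive'' is the observation that the blocks lying on $BR$ are accepted: each such block is an ancestor of all but finitely many blocks, because all other branches are finite (Lemma~\ref{lemSingleInfBranch}), only finitely many of them diverge before any fixed block of $BR$ (the arity is at most $|N|$ and there are finitely many ancestors), and each such diverging subtree is finite. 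In particular, the suffix of $BR$ that is mined by members of $P$ (which exists by the definition of a branch \emph{belonging} to a pool) consists of infinitely many accepted blocks, all mined by miners of $P$.

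Next I would fix an arbitrary source pool $S$, which exists because $\cal PG$ is a DAG. Since the computation is locally decisive, every miner---in particular every miner of $S$---receives each of the infinitely many accepted blocks mined by $P$. As $P$ is finite, some fixed $m \in P$ mines infinitely many of these blocks, and a fixed $m' \in S$ receives all of them. A received block witnesses, by definition, a journey from its miner to its recipient, and blocks mined in distinct rounds yield temporally distinct journeys. Hence there are infinitely many journeys from $m$ to $m'$, i.e. infinitely many journeys from $P$ to $S$. By the pool-graph characterization established earlier (absence of a path in $\cal PG$ implies finitely many journeys between the pools), this forces a path from $P$ to $S$ in $\cal PG$.

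Finally I would close the argument using the source property. A source pool has no incoming edges, so no path of positive length can terminate at $S$; the only way to have a path from $P$ to $S$ is the trivial one, giving $P = S$. Thus $P$ coincides with the source pool $S$, and the unique infinite branch belongs to a source pool, as claimed.

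I expect the main obstacle to be the middle step: turning ``each miner receives every accepted block'' into ``infinitely many journeys from $P$ to $S$'' and then into a $\cal PG$-path. The delicate points are confirming that the blocks propagated to $S$ are genuinely mined by $P$ (this is exactly where the ``suffix mined by $P$'' clause in the definition of belonging is needed), that distinct accepted blocks yield distinct journeys rather than one reused journey, and that infinitely many journeys between two pools really witness a path in $\cal PG$ rather than mere repeated traffic. Everything else---the reduction to a single branch, the existence of a source, and the final collapse $P=S$---is routine once this bridge is in place.
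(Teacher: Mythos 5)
Your proposal is correct and takes essentially the same route as the paper's proof: use Lemma~\ref{lemSinglePool} to pin the infinite branch to a pool $P$, use local decisiveness plus the infinitude of the branch to produce infinitely many journeys from $P$ into a source pool, and use the pool-graph path/journey characterization to conclude that this is incompatible with the source property unless $P$ is itself that source. The only differences are presentational: the paper argues by contradiction (assuming the branch's pool is not a source) and skips the details you spell out (acceptedness of the branch's blocks, pigeonhole on a single miner, distinctness of journeys), while you argue directly that $P$ coincides with an arbitrary source pool.
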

\begin{proof}
Assume that there is a locally decisive computation $C$ whose blockchain contains an infinite branch $BR$ that does not belong to source pools of $C$. According to Lemma~\ref{lemSinglePool}, $BR$ belongs to some pool $P$. Since the pool graph of $C$ is a DAG, it must have a source pool $SP$. A source pool has a finite number of journeys from the miners outside itself. 

Computation $C$ is locally decisive. This means that all miners, including the miners in $SP$, receive all blocks in $BR$.  Yet, $BR$ is infinite. This means that there are infinitely many journeys from miners in $P$ to the miners in $SP$. This means, contrary to our initial assumption, that $SP$ is not a source pool.
\QED\end{proof}

\begin{lemma}\label{lemSingleSource}
In a locally decisive computation, there is a single source pool. 
\end{lemma}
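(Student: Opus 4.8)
The plan is to argue by contradiction: assume a locally decisive computation $C$ has two distinct source pools $SP_1$ and $SP_2$, and derive a contradiction with the uniqueness of the infinite branch guaranteed by Lemma~\ref{lemSingleInfBranch}. The whole argument reduces to one claim: \emph{every} source pool has an infinite branch belonging to it. Once this is in hand, $SP_1$ and $SP_2$ each own an infinite branch; these two branches must be distinct, because by Lemma~\ref{lemSinglePool} an infinite branch belongs to only one pool, and so $C$ has at least two infinite branches. Since $C$ is locally decisive it is globally decisive, and Lemma~\ref{lemSingleInfBranch} says its blockchain has exactly one infinite branch --- the desired contradiction.

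First I would show that a source pool $SP$ mines infinitely many blocks. Because $SP$ is a source, only finitely many journeys enter it from the outside, so there is a round $r$ after which $SP$ receives no externally mined block; let $d_r$ be the (finite) maximal depth among the blocks held inside $SP$ at round $r$. If $SP$ mined only finitely many blocks in total, then together with the finitely many blocks arriving from outside, only finitely many distinct blocks would ever appear in the local copies of $SP$; but then each miner of $SP$ would both mine and receive only finitely many blocks, contradicting fairness. Hence $SP$ mines infinitely many blocks, and since $SP$ is a finite subset of $N$, by pigeonhole some single miner $m \in SP$ mines infinitely many blocks.

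Next I would extract the infinite branch. Let $T_{SP}$ be the union of all local blockchain copies ever held by miners of $SP$; it is a subtree of the blockchain rooted at the genesis, it is infinite (it contains the infinitely many blocks mined by $m$), and its arity is at most $|N|$. A finitely-branching infinite tree contains an infinite branch (König's lemma), so $T_{SP}$ has an infinite branch $BR$. It remains to see that $BR$ belongs to $SP$. The key observation is that every block stored inside $SP$ of depth greater than $d_r$ must have been mined by a member of $SP$: such a block was absent at round $r$, and after round $r$ the only way a block enters an $SP$-copy is by being mined inside $SP$ or forwarded from another $SP$-copy, so by induction on rounds its original producer is an $SP$ miner. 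Applying this to the blocks of $BR$ beyond depth $d_r$ shows that $BR$ has a suffix entirely mined by $SP$, i.e.\ $BR$ belongs to $SP$, which proves the claim and hence the lemma.

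The main obstacle is this extraction step. The delicate point is not the existence of \emph{some} infinite branch --- that is immediate from the bounded arity and König's lemma --- but guaranteeing that the branch we obtain is mined \emph{within} the pool rather than running along ancestors contributed from outside before round $r$. This is exactly what the bound $d_r$ and the ``no external input after round $r$'' property of a source pool are used for; getting these two facts to interlock cleanly, and confirming that both $r$ and $d_r$ are finite, is where the care is needed.
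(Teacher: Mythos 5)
Your proof is correct, but it takes a genuinely different route from the paper's. The paper stays entirely inside its lemma chain: by Lemmas~\ref{lemSingleInfBranch}, \ref{lemSinglePool} and \ref{lemInfSource}, the unique infinite branch belongs to one of the two source pools, say $SP_1$; local decisiveness then forces the infinitely many accepted blocks of that branch to reach $SP_2$, i.e.\ infinitely many journeys from $SP_1$ into $SP_2$, contradicting that $SP_2$ is a source. You instead prove a self-contained claim the paper never states: \emph{every} source pool owns an infinite branch. For that you use the fairness assumption (to force a source pool that eventually receives nothing from outside to mine infinitely many blocks), the arity bound $|N|$ together with K\"onig's lemma (to extract an infinite branch from the pool's cumulative tree), and the ``no external input after round $r$'' property (to show that all sufficiently deep blocks of that branch are mined inside the pool, so the branch \emph{belongs} to the pool in the paper's sense); two source pools then yield two distinct infinite branches --- distinct by Lemma~\ref{lemSinglePool} --- contradicting Lemma~\ref{lemSingleInfBranch}. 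The trade-off is clear: the paper's argument is shorter, reuses Lemma~\ref{lemInfSource}, and never invokes fairness; yours never invokes local decisiveness (global decisiveness suffices) nor Lemma~\ref{lemInfSource}, so it in fact establishes a slightly stronger statement --- under the paper's standing fairness assumption, even a merely globally decisive computation has at most one source pool --- at the cost of importing fairness, the arity bound, and K\"onig's lemma. Your key claim is effectively a converse companion to Lemma~\ref{lemInfSource}: not only must the infinite branch live in a source pool, but each source pool generates one of its own.
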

\begin{proof}
Assume the opposite: there is a locally decisive computation $C$ with at least two source pools: $SP_1$ and $SP_2$. Since a locally decisive computation is also a globally decisive computation, according to Lemma~\ref{lemSingleInfBranch}, $C$ contains a single infinite branch. According to Lemma~\ref{lemSinglePool}, this branch belongs to a single pool and, according to Lemma~\ref{lemInfSource}, this pool is a source.  That is, the infinite branch belongs to either $SP_1$ or $SP_2$. Let it be $SP_1$. This means that miners of $SP_1$ mine infinitely many blocks that belong to the infinite branch. Since $C$ is globally decisive, these blocks are accepted. 

However, $SP_2$ is also a source, this means that it has a finite number of journeys from miners outside itself. Yet, since $C$ is locally decisive, the miners in $SP_2$ need to receive the infinite number of blocks mined in $SP_1$. That is, there are infinite number of journeys from the miners of $SP_1$ to the miners of $SP_2$. That is, $SP_2$ is not a source. 
\QED\end{proof}

\noindent
The following theorem summarizes the results proven in Lemmas~\ref{lemSingleInfBranch}, \ref{lemSinglePool}, \ref{lemInfSource} and \ref{lemSingleSource}.

\begin{theorem}\label{thrmDecisive}
If a computation is globally and locally decisive, then it has exactly one infinite branch and one source pool. Moreover, this infinite branch belongs to this source pool.
\end{theorem}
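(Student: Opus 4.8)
The plan is to assemble the theorem directly from the four preceding lemmas, since each of its three assertions is essentially one of them, specialized to a computation that is simultaneously globally and locally decisive. The first thing I would record is the containment of hypotheses: by definition a locally decisive computation is also globally decisive, so the stated hypothesis makes every lemma available, whether it requires only global decisiveness (Lemma~\ref{lemSingleInfBranch}) or the stronger local decisiveness (Lemmas~\ref{lemInfSource} and \ref{lemSingleSource}).

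I would then dispatch the two counting claims separately. For ``exactly one infinite branch'' I would invoke Lemma~\ref{lemSingleInfBranch}, which applies because the computation is globally decisive. For ``exactly one source pool'' I would invoke Lemma~\ref{lemSingleSource}, which applies because the computation is locally decisive. Neither requires any new argument beyond citing the lemma and checking its hypothesis is met.

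The remaining work is the \emph{moreover} clause, which binds the unique infinite branch to the unique source pool. Here I would chain Lemma~\ref{lemSinglePool} with Lemma~\ref{lemInfSource}: the infinite branch produced above, being an infinite branch, belongs to a single pool by Lemma~\ref{lemSinglePool}, and by Lemma~\ref{lemInfSource} that pool is a source pool. Since Lemma~\ref{lemSingleSource} guarantees the source pool is unique, the pool carrying the branch must coincide with it, so the infinite branch belongs to the one source pool.

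Because every ingredient is already established, I do not expect a genuine mathematical obstacle; the only care needed is bookkeeping. Specifically, I would make sure the ``single pool'' furnished by Lemma~\ref{lemSinglePool}, the ``source pool'' identified by Lemma~\ref{lemInfSource}, and the ``unique source pool'' of Lemma~\ref{lemSingleSource} are all recognized as the same object, so that the three cited conclusions genuinely compose into one coherent statement rather than merely sitting side by side.
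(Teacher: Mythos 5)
Your proposal is correct and matches the paper exactly: the paper gives no separate proof, stating only that the theorem ``summarizes the results proven in Lemmas~\ref{lemSingleInfBranch}, \ref{lemSinglePool}, \ref{lemInfSource} and \ref{lemSingleSource},'' which is precisely the assembly you carry out. Your explicit chaining of Lemma~\ref{lemSinglePool} with Lemma~\ref{lemInfSource} and the uniqueness from Lemma~\ref{lemSingleSource} for the \emph{moreover} clause is exactly the intended argument, just written out in more detail than the paper bothers to.
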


\section{Impossibility}
\label{secImpossibility}
\vspace{-3mm}
In a solution to the Blockchain Decision Problem, every miner is required to confirm each accepted block. Theorem~\ref{thrmDecisive} states necessary conditions for the possibility of the solution. Yet, even if these conditions are satisfied, a miner may make a mistake. Indeed, assume a miner $m$ determines that a certain block $b$ belongs to the longest branches of all processes in the source pool. Miner $m$ confirms it.
Yet, a non-source pool miner may later mine a longer cousin branch to $b$, communicate it to the source pool forcing rejection of $b$. This makes $m$'s confirmation incorrect. Even though, by definition of the source pool, such links from the outside happen only finitely many times, the time they stop is not predictable. This makes the solution, in the general case, impossible.
The below theorem formalizes this intuition.

\begin{theorem}\label{thrmNoSolution}
There does not exist a solution to the Blockchain Decision Problem even for globally and locally decisive computations.
\end{theorem}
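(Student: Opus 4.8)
The plan is to assume a solution algorithm $A$ exists and derive a contradiction through an indistinguishability (adversary) argument. The core idea, formalizing the intuition preceding the theorem, is to build two globally and locally decisive computations that a single miner cannot tell apart up to the round at which it must commit, yet in which one fixed block $b$ has opposite fates. The leverage comes from the \textbf{Decision} property, which forces a confirmation at a \emph{finite} round that depends only on a finite prefix, against the \textbf{Confirmation Validity} property, which that same prefix will be made to violate.

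First I would fix a block $b$ and exhibit a clean accepting computation $E$: let the source pool $SP$ (containing a distinguished miner $m$) mine a single branch $\alpha$ through $b$ forever at a steady rate, with one outside miner $u$ that merely receives $SP$'s blocks forever (so $u$ is fair by reception) and never transmits to $SP$. By the structure described in Theorem~\ref{thrmDecisive}, $E$ is globally and locally decisive with lone infinite branch $\alpha$ and lone source pool $SP$, and $b$ is accepted. The \textbf{Decision} property then forces $m$ to confirm $b$ at some finite round $r$, and $m$'s view through round $r$ involves only $\alpha$.

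Next I would construct a rejecting computation $E'$ agreeing with $E$ exactly on $m$'s view through round $r$. Here $SP$ mines $\alpha$ identically for rounds $1,\ldots,r$ (its behavior is unchanged, since it receives nothing from $u$ in either computation), while $u$ secretly mines a competing cousin branch $\beta$ that forks from the genesis before $b$, at a faster rate so that $|\beta| > |\alpha|$ by round $r$, and crucially withholds $\beta$ from $SP$. After round $r$, $u$ delivers $\beta$ to $SP$ in a single journey; since $\beta$ is strictly longer, $SP$ adopts and extends it forever, making $\beta$ the unique infinite branch while $\alpha$, and hence $b$, is stranded on a finite branch and thus rejected. I would then verify $E'$ is again decisive with source pool $SP$ (the outside miner contributes only finitely many journeys into $SP$, so no edge into $SP$ appears in the pool graph) and locally decisive, and that $u$ is fair. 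Because $m$ receives nothing of $\beta$ before round $r$, its view is identical to $E$, so it confirms $b$ at round $r$ in $E'$ as well, confirming a rejected block and violating \textbf{Confirmation Validity}.

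The main obstacle is an apparent circularity: to overturn $b$ after confirmation, the secret competitor $\beta$ must be longer than the ever-growing $\alpha$ at the algorithm-determined round $r$, yet keeping such a competitor alive inside a decisive \emph{accepting} computation would force it to be finite and eventually overtaken, so if $A$ stalls confirmation $\beta$ might be too short. I expect to break this with the key observation that indistinguishability is required only for $m$'s local view, not the global state: since $u$ never reveals $\beta$ to $SP$ before round $r$, $\beta$ can be grown secretly all the way up to $r$ in $E'$ without altering $m$'s view, decoupling the competitor's length from $A$'s stalling. The remaining work is routine bookkeeping: confirming the pool structure, the single-infinite-branch and fairness conditions in both computations, and that $\beta$ forks strictly before $b$ so the branch switch genuinely rejects $b$.
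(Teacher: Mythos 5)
Your proof is correct, and while it shares the paper's overall strategy---an indistinguishability argument that runs a decisive computation until the Decision property forces confirmation of $b$ at a finite round $r$, then extends that identical prefix into a second decisive computation in which $b$ is rejected, contradicting Confirmation Validity---your construction is genuinely different from the paper's. The paper introduces a \emph{third}, fully hidden pool $P_3$ with no links whatsoever until round $r$ and only outgoing links afterwards: in the second computation $P_3$ becomes a \emph{new} unique source pool, its own branch is the unique infinite branch, and $b$ is rejected simply because it can never reach $P_3$. You instead keep the source pool $SP$ fixed and exploit the allowance that a source pool may receive finitely many incoming journeys: a single outside miner $u$ secretly grows a longer cousin branch $\beta$ (legitimate under the longest-branch mining rule, since the adversary controls relative mining rates so $\beta$ remains the longest branch in $u$'s local copy) and injects it once after round $r$, forcing $SP$ itself to reorganize onto $\beta$. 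Notably, your version formalizes exactly the intuition paragraph the paper states before the theorem---the paper's formal proof actually departs from that intuition---and it needs only two parties rather than three pools; it also yields a slightly sharper fact: impossibility persists even when the source pool's identity and membership are identical in both computations, which cleanly explains why Algorithm \emph{KSM} requires the source pool to be \emph{initially closed} rather than merely known. What the paper's construction buys in exchange is lighter branch-length bookkeeping: its hidden pool never adopts foreign blocks because it receives nothing (though it too must out-mine the others for its branch to become the unique infinite one, a detail the paper glosses), whereas your construction must, as you correctly flag and resolve, arrange that $\beta$ overtakes $\alpha$ by the algorithm-determined round $r$, which is sound precisely because $u$'s secret mining never perturbs the prefix that determines $r$.
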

\begin{proof} 
Assume there is an algorithm $A$ that solves \emph{BDP} for globally and locally decisive computations. Consider a globally and locally decisive computation $C_x$ which contains mining pools $P_1$ and $P_2$ such that $P_1$ is the source pool. 

Since $C_x$ is globally and locally decisive, according to Theorem~\ref{thrmDecisive}, it has a single infinite branch, a single source pool and the branch belongs to this source pool. 
This means that there are infinite number of blocks in the infinite branch. All these blocks are accepted. After some round $r_1$ they must be mined in $P_1$. Let block $b$ be one such block. Since the computation is locally decisive, $b$ has to reach miners in $P_2$. The Decision Property of \emph{BDP} requires that all miners eventually confirm accepted blocks. This means that miners of $P_2$ have to eventually confirm $b$. Let $r_2$ be the round where some miner $m_2 \in P_2$  confirms $b$ in $C_x$. 

Consider a computation $C_y$ that has an extra pool $P_3$. Communication  in $C_y$ is as follows. Miners of pool $P_3$ have no links to the outside miners until round $r_2$.  Since the miners of $P_3$ do not influence other miners, we construct $C_y$ such that up to the round $r_2$, the actions of miners of $P_1$ and of $P_2$ are the same as in  $C_x$.  This includes $m_2$ confirming block $b$. We construct the remainder of $C_y$ as follows. Miners of $P_3$ have only outgoing links to miners of $P_1$ and $P_2$ for the remainder of $C_y$. That is, $P_3$ is a source. We construct $C_y$ to be locally and globally decisive. That is, we make $P_3$ its own single infinite branch. 

By construction, miners of $P_1$ never send messages to miners of $P_3$. This means that block $b$ mined in $P_2$ does not reach $P_3$. Hence, $b$ does not belong to the infinite branch. Therefore, $b$ is rejected. However, miner $m_2$ confirms it in $C_x$ and, therefore, in $C_y$. This is contrary to the Confirmation Validity property of \emph{BDP}, which stipulates that miners may confirm only accepted blocks. Thus, despite our initial assumption, algorithm $A$ does not solve the Blockchain Decision Problem. Hence, the theorem.
\QED\end{proof}

\section{Solutions}
\label{secSolutions}
Previously, we considered completely formed infinite blockchain trees. However, to solve the Blockchain Decision Problem, individual miners have to make decisions whether a particular block is accepted or rejected on the basis of a tree that is not yet complete. Moreover, a miner may not be aware of some already mined blocks due to propagation delays. To describe this uncertainty, we introduce additional notation.

A branch $BR$ is \emph{dead} if all miners are mining on cousin branches longer than $BR$. A branch is \emph{live} otherwise. Notice that once a branch is dead, it may not become live. 
Thus, a block belongs to dead branches only, it is rejected. In an infinite computation, a block is accepted if it belongs to all infinite branches. To put another way, a block is accepted if all its cousin branches are dead. The algorithms in this section exploit the miners' ability to detect dead branches to confirm accepted blocks. 

Per Theorem~\ref{thrmNoSolution}, the solution to the Blockchain Decision Problem is impossible if non-source-pool miners are able to send their mined blocks to the source pool. We, therefore, consider the following restriction. 
A mining pool is \emph{initially closed} if its members do not have incoming edges from non-pool members. If source pool is initially closed, to evaluate whether the branch is dead, it is sufficient to consider blocks generated by source pool miners only. 

\ \\
\textbf{Known propagation time.} Let $m$ be an arbitrary miner in the source pool. \emph{Source pool propagation time} $PT$ is the time of the longest journey from $m$ to any other miner in the network. If $PT$ is fixed, it takes at most $PT$ rounds for a message sent by $m$ to reach all miners. If $PT$ is known, the solution to \emph{BDP} seems straightforward as dead branches eventually become shorter than live ones. However, this solution is not immediate since, even with fixed $PT$, the length difference between live branches may be arbitrarily large. Indeed, a miner may mine a number of blocks extending its branch length significantly. However, other miners may subsequently mine on their branches catching up and keeping their branches live.

\begin{algorithm}[!t]
\small
\SetKwData{accept}{accept}
\SetKwData{reject}{reject}

\textbf{Constants:} \\
    $p$ \tcp*[f]{miner identifier} \\
    $PT$ \tcp*[f]{source pool propagation time, integer} \\
\vspace{2mm}
\textbf{Variables:} \\
 $T$ \tcp*[f]{blockchain tree, initially genesis} \\
 $L$ \tcp{set of tuples $\langle b,l\rangle$, where $b\in T$ and $l$ is either \accept or \reject initially $\varnothing$, if $b \in T$ and $b\not\in L$, then $b$ is unlabeled}
\vspace{2mm}
\textbf{Actions:} \\
\If{mined block $b$}{
   add $b$ to $T$
}
\If{available link to miner $q$}{
   send $T$ to $q$ 
}
\If{receive $T_q$ from miner $q$}{
   merge $T$ and $T_q$  
}
\If{exists unlabeled $b_1$ such that for every $BR(b_1)$, there is a cousin bock $b_2$ such
    that $depth(b_2) > BR(b_1)$ for at least $2\cdot PT$ rounds}
    {
    \label{algKPTlineReject}
    add $\langle b_1,  \reject\rangle$ to $L$
    }
\If{exists unlabeled $b$ such that for its every cousin $c$: $\langle c, \reject\rangle \in L$ }{
    \label{algKPTlineAccept}
     add $\langle b, \accept\rangle$ to $L$ \\
     \textbf{confirm} $b$ 
}
\caption{Known Source Pool Propagation Time Algorithm {\em KPT}.}
\label{algKPT}
\end{algorithm}

Instead, to detect a dead branch, the algorithm that solves \emph{BDP}, relies on the branch length difference over a certain period of time.  
We call this algorithm \emph{KPT}. Its code is shown in Figure~\ref{algKPT}. The algorithm operates as follows. Each miner $p$ maintains the local copy of the blockchain tree $T$ and a set of per-block labels $L$ where it stores decisions whether the block is accepted or rejected. If the decision is not reached, the block is unlabeled. 
Once block $b$ is mined, it is added to the tree $T$. If a link to some miner $q$ appears, miner $p$ sends its entire blockchain to $q$.  

The decisions are reached as follows. An unlabeled block $b_1$  is labeled rejected if for its every branch $BR(b_1)$ the following happens. There is a cousin block $b_2$ such that the depth of $b_2$ is greater than the length of this branch $BR(b_1$) for at least $2\cdot PT$ rounds. An unlabeled block $b$ is accepted if all its cousins are rejected. In the latter case, $b$ is confirmed.  

\begin{lemma}\label{lem2PT}
Let, at some round $r$, some miner $m$ observe that there is a block $b$ whose depth is greater than the length of its cousin branch $BR$. If $b$'s depth is still greater than the length of $BR$ at round $r+ 2\cdot PT$, then $BR$ is dead. 
\end{lemma}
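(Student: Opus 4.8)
The plan is to argue by contradiction, exploiting the fact that under the initially closed source pool assumption only source pool miners matter for deciding whether a branch is dead, and reading the quantity $2\cdot PT$ as a round trip: $PT$ rounds for $m$'s observation to travel outward to the rest of the source pool, and $PT$ rounds for the relevant reply to travel back to $m$. Throughout I take $m$ to be a source pool miner, so that the propagation bound $PT$ applies both to messages leaving $m$ and to messages reaching $m$ from any other source pool miner. I write $d$ for $depth(b)$, which is fixed, and I compare $d$ against each miner's \emph{local} view of the length of $BR$, since a miner only ever extends what it locally regards as its longest branch. Suppose, for contradiction, that $BR$ is still live at round $r+2\cdot PT$.

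First I would push $m$'s round-$r$ knowledge outward. Since $m$ is in the source pool and sends its entire tree along every available link, every miner---and in particular every source pool miner---knows block $b$ at depth $d$ by round $r+PT$; hence from round $r+PT$ on, every source pool miner has a longest branch of length at least $d$. Next I invoke monotonicity of death (a dead branch never revives): if $BR$ is live at round $r+2\cdot PT$, it is live at every earlier round, in particular at round $r+PT$. Liveness of $BR$, restricted to the source pool by the initially closed assumption, means some source pool miner $m'$ is, at round $r+PT$, mining on a branch $B'$ that is not a cousin of $BR$ strictly longer than $BR$ in $m'$'s own view.

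The key step is then a short case analysis on $B'$ showing that $m'$ must already see $BR$ at length at least $d$. Because $m'$ knows $b$ by round $r+PT$, its longest branch $B'$ has length at least $d$. If $B'=BR$, then $m'$ sees $BR$ at length at least $d$ directly. If $B'$ is a cousin of $BR$, then ``not strictly longer than $BR$'' forces $m'$ to see $BR$ at least as long as its own longest branch $B'$, hence again at length at least $d$. Either way, at round $r+PT$ the source pool miner $m'$ holds a block on $BR$ at depth at least $d$. Now I send this reply back: since $m'$ is a source pool miner, its round-$(r+PT)$ state reaches $m$ within $PT$ rounds, so by round $r+2\cdot PT$ miner $m$ too holds a block on $BR$ at depth at least $d$. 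Thus $m$'s local length of $BR$ is at least $d=depth(b)$ at round $r+2\cdot PT$, contradicting the hypothesis that $depth(b)$ is still strictly greater than the length of $BR$ there. Hence $BR$ is dead at round $r+2\cdot PT$, and by monotonicity it remains dead.

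I expect the main obstacle to be bookkeeping the distinction between a miner's local view of the length of $BR$ and its true global length, and making sure the two $PT$ delays compose exactly into $2\cdot PT$ rather than leaking an extra round. The argument is tight precisely because a source pool miner can be made to extend $BR$ up until the moment it learns of the longer cousin through $b$, so one full $PT$ is genuinely needed outward to halt such extension and a second full $PT$ is needed for evidence of any already-achieved extension to return to $m$. This is also where the initially closed assumption is essential, since it lets me confine both the supporting miner $m'$ and the reachability bound to the source pool, and where I must insist that $m$ itself lies in the source pool so that the outbound bound $PT$ applies to it.
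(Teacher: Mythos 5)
Your overall shape---a $PT$-rounds-out, $PT$-rounds-back round trip adding up to $2\cdot PT$---is the same as the paper's, but you anchor the outward leg at the wrong place, and this creates a genuine gap. You require the observer $m$ to be a source pool miner, so that $m$'s own knowledge of $b$ can reach every other miner within $PT$ rounds. The lemma, however, is stated for an arbitrary miner $m$, and it must hold for arbitrary $m$: algorithm \emph{KPT} is run by every miner, and both the Decision and the Confirmation Validity properties of \emph{BDP} quantify over all miners, so the rejection rule of Line~\ref{algKPTlineReject} has to be sound at non-source-pool miners as well. For such an $m$ your outward leg is not merely unproven but unrepairable along your lines: the source pool is \emph{initially closed}, meaning there are no incoming edges into the pool from outside, so nothing a non-pool miner observes at round $r$ can ever ``travel outward to the rest of the source pool.'' Your closing remark that you ``must insist that $m$ itself lies in the source pool'' concedes exactly this point, but the concession is fatal rather than benign.

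The paper avoids the problem by anchoring the first $PT$ interval at the block rather than at the observer: it assumes $b$ was mined by a source pool miner $m_b$ at some round $r_b$, so $b$ reaches every miner by round $r_b+PT$, killing any branch $BR$ that is still shorter by then; consequently every block ever mined on $BR$ is mined by round $r_b+PT$ and reaches $m$ by round $r_b+2\cdot PT$. The observer enters only through the inequality $r\ge r_b$ ($m$ cannot observe $b$ before it is mined), which gives $r+2\cdot PT\ge r_b+2\cdot PT$ and makes the conclusion available to \emph{every} miner in the network, not just pool members. Your return leg (a live witness $m'$ must be a pool member under the closed-pool reading of deadness, and a pool member's state reaches everyone within $PT$) matches the paper's reasoning and is fine. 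So the fix is local but essential: replace ``knowledge of $b$ spreads from $m$'' by ``$b$ itself spreads from its source pool miner,'' and shift the time origin from $r$ to $r_b$.
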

\begin{proof}
Assume block $b$ is mined by miner $m_b$ of the source pool in round $r_b$. Since the maximum source pool propagation time $PT$ is fixed, in round $r_b+PT$ every miner receives $b$. If there is a branch $BR$ that is shorter than the depth of $m_b$ in round $r_b+PT$, then every miner $m_c$ that mines on branch $BR$ or shorter branches, switches to a branch that contains $b$ or a longer branch. That is, $BR$ is dead by $r_b + PT$. Therefore, blocks may be mined on $BR$ no longer than round $r_b + PT$. It takes any block at most $PT$ rounds to propagate to all miners in the network. 

Let us consider an arbitrary miner $m$. Since $b$ is mined in round $r_b$, the earliest round  when $m$ may receive $b$ is also $r_b$. 
That is, if some miner $m$ observes that there is a block whose depth is greater than some  branch for $2\cdot PT$ rounds, then this branch is dead. 
\QED\end{proof}

\begin{theorem}\label{thrmKPT}
Known Source Pool Propagation Time Algorithm {\em KPT} solves the Blockchain Decision Problem with initially closed source pool.
\end{theorem}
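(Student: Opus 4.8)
The plan is to verify the two defining properties of \emph{BDP} --- Confirmation Validity and Decision --- separately, treating Lemma~\ref{lem2PT} as the workhorse that links the algorithm's syntactic ``deeper-cousin-for-$2\cdot PT$-rounds'' test to the semantic notion of a dead branch. Throughout I would keep in mind that the initially closed source pool is precisely the hypothesis ruling out the pathology of Theorem~\ref{thrmNoSolution}: with no incoming edges into the source pool, no external miner can ever revive a branch inside it, so the single infinite branch guaranteed by Theorem~\ref{thrmDecisive} is mined entirely within the source pool and its blocks only ever propagate outward.

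For Confirmation Validity (safety) I would trace the only path by which the algorithm emits a decision. A block $b$ is confirmed solely on line~\ref{algKPTlineAccept}, which fires only when every cousin $c$ of $b$ already carries a reject label. A reject label, in turn, is placed on line~\ref{algKPTlineReject} only after observing, for \emph{every} branch $BR(c)$ through $c$, a cousin block deeper than $BR(c)$ for $2\cdot PT$ rounds. By Lemma~\ref{lem2PT} each such branch $BR(c)$ is then genuinely dead, so $c$ lies on dead branches only and is therefore rejected in the true semantic sense. Hence when all cousins of $b$ carry reject labels, all cousin branches of $b$ are dead, which by the definition of acceptance means $b$ belongs to every live, and in particular every infinite, branch and is accepted. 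Thus the algorithm confirms only accepted blocks.

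For Decision (liveness) I would fix an accepted block $b$ and an arbitrary miner $m$, and argue that $m$ eventually confirms $b$. Since $b$ is accepted, it is an ancestor of all but finitely many blocks, so among the non-descendants of $b$ there are only finitely many cousins, and every cousin branch of $b$ is dead, hence finite and of eventually fixed length. Because the computation is locally decisive, $m$ receives every block of the single infinite branch; and since that branch, mined by the source pool, grows without bound, $m$ eventually sees on it a block $b_2$ whose depth exceeds the length of any fixed dead cousin branch $BR(c)$. As $BR(c)$ never grows again while the infinite branch keeps extending, this strict inequality persists for at least $2\cdot PT$ rounds, so the reject rule fires for every branch through every cousin $c$. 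Each of the finitely many cousins is therefore eventually labeled rejected; taking the maximum over these finitely many labeling times, the accept rule fires at $m$ and $m$ confirms $b$.

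The step I expect to be the main obstacle is the liveness timing argument, namely ensuring that $m$'s \emph{local} view actually supports the $2\cdot PT$-round observation. I must show that the observed length of a dead cousin branch stabilizes at $m$, with no late-arriving blocks silently extending it --- which follows from deadness together with the initially closed source pool --- while the infinite-branch witness $b_2$ is received and retained by $m$, so that $m$ genuinely records ``$depth(b_2)$ exceeds the length of $BR(c)$'' across a full window of $2\cdot PT$ consecutive rounds. Establishing that the witness blocks feeding the reject test really are source-pool blocks of the infinite branch, so that Lemma~\ref{lem2PT} applies verbatim rather than to some transient non-source block, is the delicate point where the initially closed assumption does the real work.
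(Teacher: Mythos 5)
Your proposal follows essentially the same route as the paper: both verify Confirmation Validity by tracing the labeling rules through Lemma~\ref{lem2PT} (the $2\cdot PT$ observation certifies a dead branch, a block on dead branches only is rejected, and a block whose cousins are all rejected is accepted), and both verify Decision by arguing that the finitely many cousin branches of an accepted block die and are eventually detected as dead, after which the accept rule fires. If anything, your liveness argument is more complete than the paper's, which simply asserts that a miner determines a branch is dead within $2\cdot PT$ rounds; you explicitly supply the witness --- a sufficiently deep block of the infinite source-pool branch, which reaches every miner within $PT$ rounds while the dead cousin branch's observed length has stabilized --- which is precisely the step the paper leaves implicit.
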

\begin{proof} 
Let us consider the Confirmation Validity Property of \emph{BDP}. 
According to Lemma~\ref{lem2PT},  If $PT$ is known and if miner $p$ observes that some block is deeper than the height of a branch for longer than $2\cdot PT$ rounds, then this branch is dead. If some block belongs to dead branches only, it is rejected. This is the exact condition under which blocks are labeled rejected in \emph{KPT}, see Line~\ref{algKPTlineReject}.
If all cousins are rejected, the block is accepted. This is how the is block is labeled accepted and confirmed in \emph{KPT}, see Line~\ref{algKPTlineAccept}.
To put another way, \emph{KPT} confirms only accepted blocks which satisfies the Confirmation Validity Property of \emph{BDP}.

Let us now discuss the Decision Property and show that every accepted block is eventually confirmed. Indeed, according to Lemma~\ref{lem2PT}, a miner determines that a branch is dead in at most $2\cdot PT$ rounds. A block is rejected once all branches that it belongs to are dead. That is, a block rejection is determined in this many rounds after the last branch of the block is dead. 

A block is labeled accepted and then confirmed after all its cousins are rejected. To put another way, a block is accepted after at most $2\cdot PT$ rounds of the rejection of the last cousin block. This proves that all accepted blocks are eventually confirmed and \emph{KPT} satisfies the Decision Property of \emph{BDP}.
\QED
\end{proof}

Let us describe a couple of simple enhancements of \emph{KPT}. Since miners never make mistakes in their classification of reject and accept, a miner may send its label set $L$ to help its neighbors make their decisions faster. Also, a miner may determine dead branches quicker if each block is labeled with the round of its mining. In this case, to ascertain that a certain branch $BR$ is dead, it is sufficient to check if there is a cousin block $b_2$  such that $BR$ does not outgrow $b_2$ for $PT$ rounds.

\ \\
\textbf{Known source pool membership.}
Miner \emph{position} in a blockchain tree is the block on which it is currently mining. Note that the depth of a miner's position throughout the computation may only increase. Once it is observed that all source pool miners moved to positions longer than a particular branch, the source pool miners may not mine on this branch. That is, the branch is dead. We state this formally in the following lemma.

\begin{lemma} \label{lemPosition}
If some miner $m$ observes that there is a branch $BR$  such that the depth of the position of every source pool miner is greater than the length of $BR$, then $BR$ is dead.
\end{lemma}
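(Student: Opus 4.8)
The plan is to combine two facts the paper has already established---the monotonicity of a miner's position depth and the fact that infinite branches belong to the source pool (Lemma~\ref{lemInfSource})---with the initially-closed assumption. First I would make precise what it means for $m$ to ``observe'' the positions. Miner $m$ learns the position of a source pool miner $s$ only through a journey, so the depth $m$ records for $s$ is a value that $s$ held at some earlier round. As the text notes, the depth of a miner's position may only increase; hence $m$'s recorded depth is a \emph{lower bound} on $s$'s current depth. Consequently, if $m$ observes that every source pool miner sits at depth greater than the length of $BR$, then every source pool miner currently has, and by monotonicity will forever have, position depth exceeding the length of $BR$.

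Next I would show that no source pool miner will ever mine on $BR$ again. A miner extends only one of its longest branches, yet a source pool miner whose position depth exceeds the length of $BR$ already holds a branch strictly longer than $BR$. For such a miner to extend $BR$, the branch $BR$ would have to be among its longest, contradicting the strict inequality; and since its position depth never decreases, $BR$ can never again become one of its longest branches. Thus, from this round on, the tip of $BR$ receives no new block from any source pool miner.

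Then I would invoke the initially-closed assumption to conclude deadness. Because the source pool has no incoming edges from non-pool members and, by Lemma~\ref{lemInfSource}, every infinite branch belongs to the source pool, the branch $BR$ could lie on an infinite branch only if source pool miners kept extending it---which, as just argued, they do not. Any further growth of $BR$ must therefore come from non-source-pool miners; but in a decisive computation these miners keep receiving the source pool's ever-lengthening branch and must switch to it, so they can add only finitely many blocks to $BR$. Hence every miner eventually mines on a cousin branch longer than $BR$, which is exactly the definition of a dead branch.

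The step I expect to be the main obstacle is bridging the literal definition of ``dead,'' which quantifies over \emph{all} miners, with a conclusion drawn from the positions of the source pool miners alone. The paper's remark that for an initially-closed source pool ``it is sufficient to consider blocks generated by source pool miners only'' is precisely the principle that closes this gap, and I would justify it through Lemma~\ref{lemInfSource}: once the source pool permanently abandons $BR$, no infinite branch can pass through $BR$, so every block on $BR$ beyond the shared trunk is an ancestor of only finitely many blocks and is rejected, rendering $BR$ dead.
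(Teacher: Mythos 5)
Your core argument is exactly the paper's: the paper states this lemma \emph{without} a formal proof, relying only on the two observations you reproduce in your first two paragraphs --- a reported position is a lower bound on the current one because position depth is monotone, and a source pool miner whose position is deeper than $BR$ can never again have $BR$ among its longest branches (given the initially closed pool, since outside extensions of $BR$ never enter its local tree), so the source pool never returns to $BR$. Where you diverge is in the bridge to the literal, all-miners definition of ``dead.'' The paper does not attempt that bridge: it instead adopts the convention that, for an initially closed source pool, it is sufficient to consider blocks generated by source pool miners only, and it explicitly concedes that non-source-pool miners ``may still mine on the dead branches and extend them.'' You, by contrast, try to prove the literal statement by arguing that non-source-pool miners can add only finitely many blocks to $BR$.

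Be aware that this extra step is not free: it silently adds the hypothesis that the computation is (locally) decisive. Without it, the literal statement is false --- a non-source-pool miner that never has incoming links is still fair (it mines infinitely many blocks itself) and can extend $BR$ forever, so $BR$ never becomes dead in the all-miners sense even though every source pool miner has moved past it; this is precisely the scenario the paper concedes. Both ingredients of your bridge --- the appeal to Lemma~\ref{lemInfSource} and the claim that outside miners ``keep receiving the source pool's ever-lengthening branch and must switch'' --- are consequences of local decisiveness, not of the lemma's stated hypotheses. Under that added assumption your argument does go through (an infinite extension of $BR$ by non-pool miners would be an infinite branch outside the source pool, contradicting Lemmas~\ref{lemSinglePool} and~\ref{lemInfSource}), so the net comparison is: same core as the paper, plus a stronger conclusion purchased with an assumption the lemma does not grant, whereas the paper's own resolution is to weaken the notion of deadness to source-pool miners rather than strengthen the hypotheses.
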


\begin{algorithm}[htb]
\small
\SetKwData{accept}{accept}
\SetKwData{reject}{reject}

\textbf{Constants:} \\
    $p$ \tcp*[f]{miner identifier} \\
    $SM$ \tcp*[f]{set of ids of source pool miners} \\
\vspace{2mm}
\textbf{Variables:} \\
 $T$ \tcp*[f]{blockchain tree, initially genesis} \\
 $P$ \tcp{set of tuples $\langle b, m\rangle$, where $b\in T$ and $m \in SM$, initially $\varnothing$, positions of source pool miners}
 $L$ \tcp{set of tuples $\langle b,l\rangle$, where $b\in T$ and $l$ is either \accept or \reject initially $\varnothing$, if $b \in T$ and $b\not\in L$, then $b$ is unlabeled, block labels}
\vspace{2mm}
\textbf{Actions:} \\
\If{mined block $b$}{
   add $b$ to $T$ \\
   \If{$p \in SM$}{
    update $p$'s entry in $P$ to  $\langle b, p\rangle$
    }
}
\If{available link to miner $q$}{
   send $T, P$ to $q$ 
}
\If{receive $T_q, P_q$ from miner $q$}{
   merge $T$ and $T_q$, merge $P$ and $P_q$ \\
   \If{$p \in SM$}{
    let $b$ be the deepest block in $T$ \\
    update $p$'s entry in $P$ to  $\langle b, p\rangle$
    } 
}
\If{exists unlabeled $b_1$ such that for every $BR(b_1)$, for all $m \in SM$ 
    there exists $\langle b_2, m \rangle \in P$ such that  $depth(b_2) > BR(b_1)$} 
    {
    add $\langle b_1,  \reject\rangle$ to $L$
    }
\If{exists unlabeled $b$ such that all cousins of $b$ are \reject}{
     add $\langle b, \accept\rangle$ to $L$ \\
     \textbf{confirm} $b$ 
}
\caption{Known Source Pool Membership Algorithm {\em KSM}.}
\label{algKSM}
\end{algorithm}

Determining source pool miner positions directly from mined blocks in the blockchain is not always possible: some source pool miner, even if it is fair, may never mine a block if it keeps receiving longer branches. Instead, the below algorithm relies on miners directly reporting their positions. We call this algorithm \emph{KSM}. Its code is shown in Figure~\ref{algKSM}. Similar to \emph{KPT}, it maintains the blockchain tree $T$ and a set of accept/reject labels per each block $L$. Besdies those, \emph{KSM} also maintains set $P$ where it records the positions of all miners in the source pool. Each miner sends its collected positions together with the blockchain along all outgoing links. 
A block is rejected if all its branches are shorter than the known positions of the source pool miners. Note that non-source pool miners may still mine on the dead branches and extend them. However, since the source pool is closed and the source pool miners never see these non-source pool generated blocks, they are never added to the live branches. 
The block labeling is similar to \emph{KPT}. Once the dead branches are determined and the rejected blocks are labeled, the blocks whose cousins are dead are accepted and confirmed. The correctness argument is similar to that of \emph{KPT}. It is stated in Theorem~\ref{thrmKSM}.

\begin{theorem}\label{thrmKSM}
Known Source Pool Membership  Algorithm {\em KSM} solves the Blockchain Decision Problem with initially closed source pool.
\end{theorem}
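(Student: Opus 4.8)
The plan is to mirror the two-part structure of the proof of Theorem~\ref{thrmKPT}, establishing the Confirmation Validity property and then the Decision property, but replacing the timing argument of Lemma~\ref{lem2PT} with the position argument of Lemma~\ref{lemPosition}. Throughout I would work in the regime that makes the problem solvable: a globally and locally decisive computation whose single source pool is initially closed. By Theorem~\ref{thrmDecisive} such a computation has exactly one infinite branch and it belongs to the source pool, and every other (cousin) branch is finite.

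For Confirmation Validity, I would show that \emph{KSM} labels a block rejected only if it is genuinely rejected. Suppose some miner $p$ labels $b_1$ rejected. By the rejection guard, for every branch $BR(b_1)$ and every source pool member $m \in SM$ there is a recorded tuple $\langle b_2, m\rangle \in P$ with $depth(b_2)$ greater than the length of $BR(b_1)$. The first subtle point to nail down is that these recorded positions, though possibly stale, remain certificates: a source pool miner's position depth only increases, so once $m$ was observed at $depth(b_2)$ it never again mines at or below the length of $BR(b_1)$. Combined with initial closedness—source pool miners never receive blocks from outside the pool—no source pool miner will ever extend $BR(b_1)$ again. Lemma~\ref{lemPosition} then gives that $BR(b_1)$ is dead, and since the unique infinite branch belongs to the source pool, such an abandoned branch is finite. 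Hence $b_1$, lying on finite branches only, is rejected. A block is confirmed only after all of its cousins are so labeled, i.e. only when every cousin branch is dead; such a block then lies on the surviving infinite branch and is an ancestor of all but finitely many blocks, so it is accepted. Thus \emph{KSM} confirms only accepted blocks.

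For the Decision property I would show that every accepted block $b$ is eventually confirmed by every miner. Because $b$ is accepted, it has only finitely many cousin branches and each is finite; let $\ell$ be their maximal length. The key liveness claim is twofold. First, every source pool miner eventually occupies a position of depth exceeding $\ell$: by fairness each such miner either mines infinitely often or receives infinitely many new blocks, and in the decisive computation these carry it arbitrarily deep along the infinite branch, so its entry in $P$ grows without bound; note the algorithm updates this entry on receipt as well as on mining, which is essential since a fair source pool miner may never mine. Second, these position updates must reach the deciding miner $p$. This is where I expect the main obstacle to lie, and it is precisely the difference from \emph{KPT}: \emph{KPT} infers deadness locally from a timing bound, whereas \emph{KSM} requires explicit position information to propagate to $p$. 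To close this I would appeal to the pool structure: members of the source pool are mutually reachable infinitely often, so the freshest positions accumulate within $SM$, and since the source pool is the unique source of a locally decisive computation there are infinitely many journeys from it to every miner, along each of which the bundled set $P$ is forwarded. Hence every miner $p$ eventually observes all of $SM$ at positions deeper than $\ell$, labels every block of every cousin branch of $b$ rejected, and then—there being finitely many such branches—labels $b$ accepted and confirms it. This establishes the Decision property and completes the proof.
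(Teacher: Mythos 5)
Your proposal is correct and takes essentially the same approach as the paper, which in fact offers no standalone proof of Theorem~\ref{thrmKSM} beyond the remark that ``the correctness argument is similar to that of \emph{KPT}'' with Lemma~\ref{lemPosition} playing the role of Lemma~\ref{lem2PT}. Your two-part structure (Confirmation Validity via Lemma~\ref{lemPosition} plus position monotonicity and initial closedness, Decision via fairness and the infinitude of journeys from the source pool to every miner) is exactly that intended correspondence, worked out in more detail than the paper's own sketch --- including the point, which the paper makes only in prose, that positions must be updated on receipt as well as on mining because a fair source-pool miner may never mine.
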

Observe that in \emph{KSM}, all miners know the source pool membership. Thus, a miner that is not in the source pool knows that all the blocks that it mines are rejected. So this miner may either not mine its own blocks at all or discard them as soon as they are mined. 

\vspace{-3mm}
\section{Extensions and Optimizations}
\vspace{-3mm}
The algorithm presentation and discussion in the previous sections focused on simplicity. However, there are optimizations that can be implemented to make the algorithms more applicable and more generic. We are going to list them here. 

In the previous section, we assumed that the pool is initially closed. However, both algorithms could be modified to operate correctly if there is a known upper bound when the source pool is closed. That is, all miners are aware of the round number after which there are no incoming links for source pool miners from non-source pool miners. 

Also, we assumed that each miner is sending the entire copy of its blockchain. This is unnecessary. First, with no modifications, both algorithms operate correctly even if each miner sends only its longest branch. That is, the branch that it is currently mining on. However, further sending optimization is possible. Observe that the operation of the algorithms hinges on the miners communicating infinitely often. Thus, if a miner keeps track of the blocks it already sent, it is sufficient to send only the oldest, i.e. the deepest unsent block over each link. With this modification, the two algorithms, \emph{KPT} and \emph{KSM}, transmit only $\log N$ bytes in every message. That is, the two algorithms use constant size messages.

We assumed that link communication is instantaneous. However, the algorithms remain correct even if a message in each link is delayed for arbitrary time.   

It is interesting to consider message loss. If there is fair message loss that allows ultimate progress, the two algorithms operate correctly if each miner sends the entire blockchain or the longest branch in every message. However, since the feedback communication between receiver and sender is not guaranteed, we suspect that constant message size algorithm for either known pool membership or known propagation time does not exist.
\vspace{-3mm}
\section{Performance Evaluation}
\label{secPerformance}
\vspace{-2mm}
For our performance evaluation studies, we used QUANTAS abstract simulator~\cite{oglio2022quantitative}. We generated dynamic topologies as follows. The maximum number of potential neighbors $mx \leq |N|-1$ was fixed. Each round, for every miner, the number of actual neighbors was selected uniformly at random from $0$ to $mx$; the neighbor identifiers were also selected randomly. Miners generated blocks at the rate of $2.5\%$.

\begin{figure}[htbp]
\hspace{-3mm}
\begin{tabular}{c@{\hspace{3mm}}c}
\begin{minipage}[t]{0.51\textwidth}
   \includegraphics[width=\columnwidth]{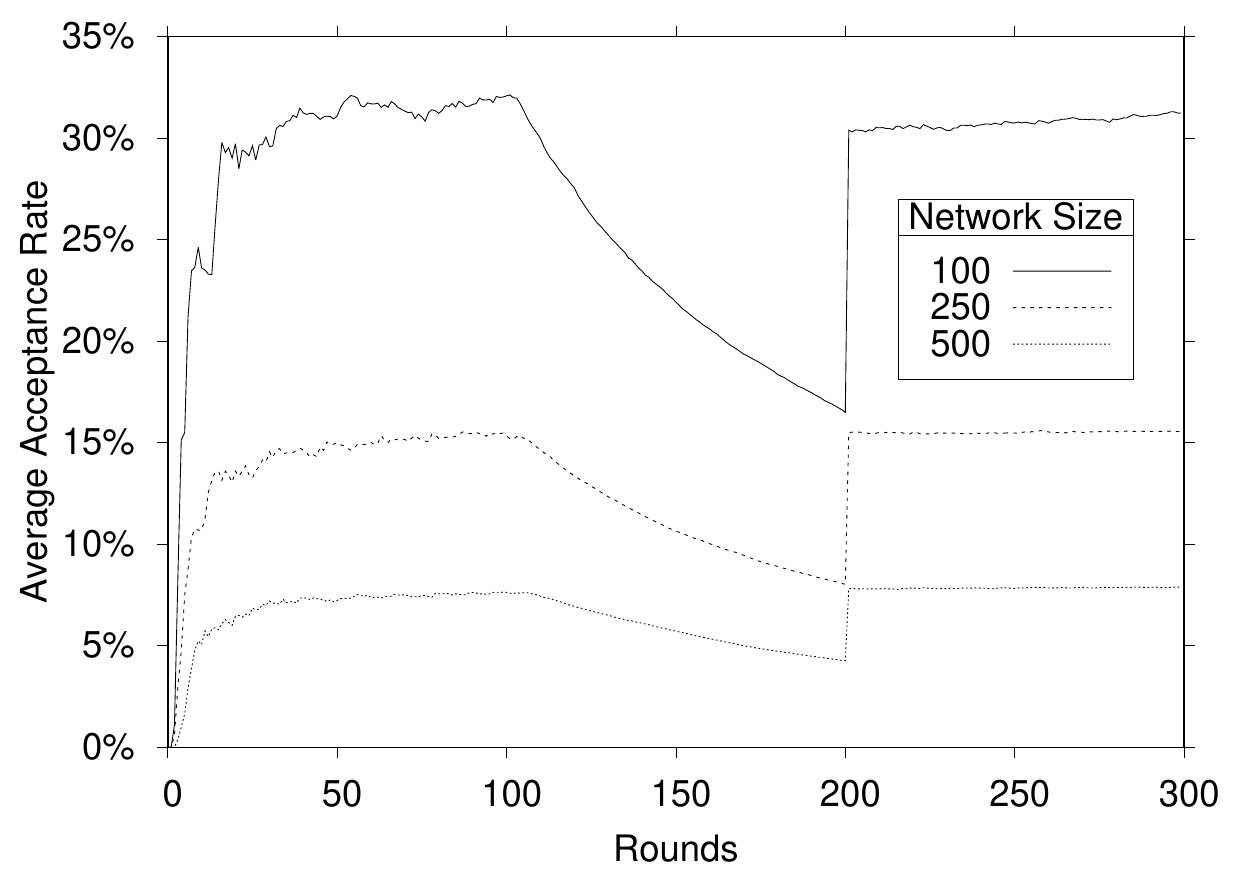}
   \vspace{-4mm}
   \caption{Acceptance rate change as source pool membership is modified: complete network from rounds $0$ to $99$, $25\%$ from $100$ to $199$, and complete network thereafter.}
   \label{figAcceptance}
\end{minipage}
&
\begin{minipage}[t]{0.51\textwidth}
   \includegraphics[width=\columnwidth]{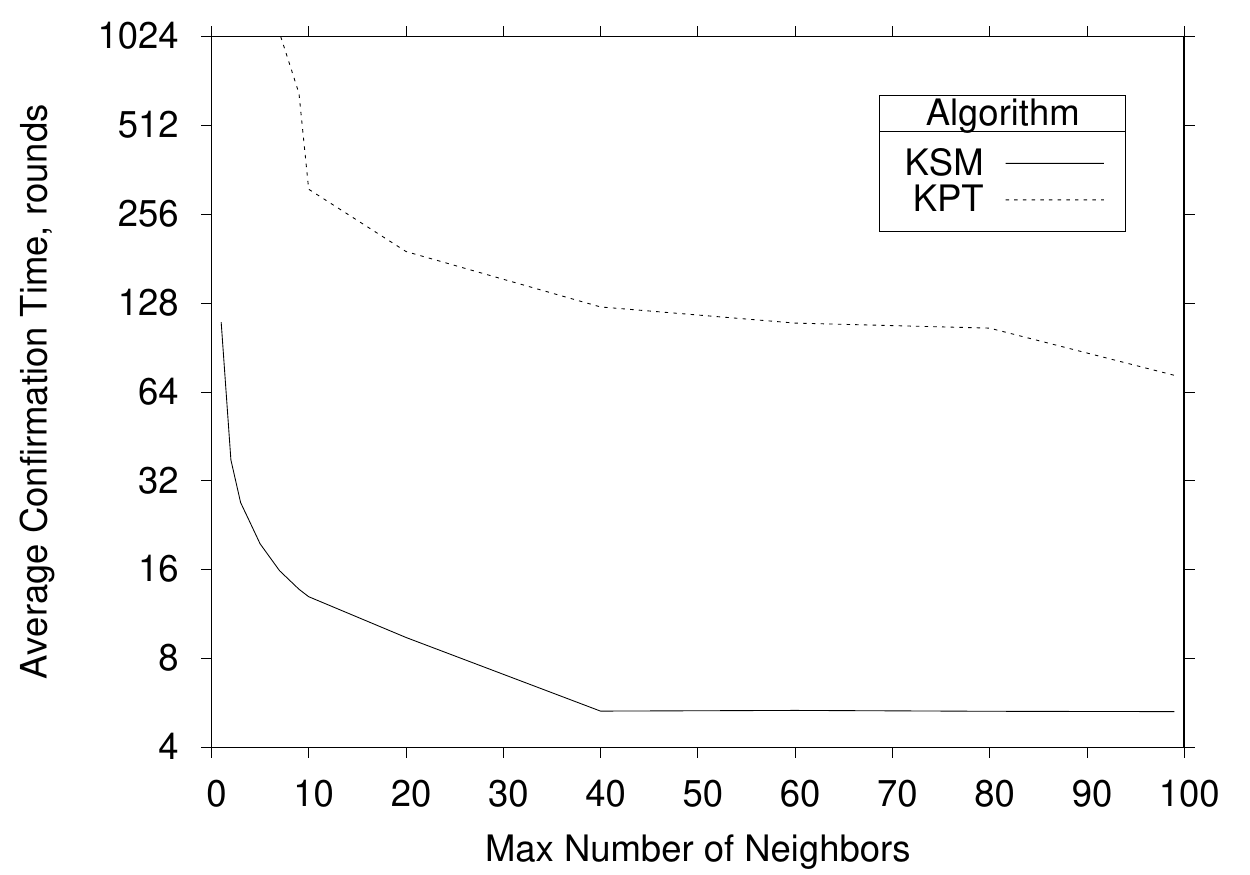}
   \vspace{-4mm}
   \caption{Confirmation time vs. maximum neighborhood size.}
   \label{figMaxNeighbors}
\end{minipage}
\end{tabular}
\vspace{-8mm}
\end{figure}

In the first experiment, we studied the dynamics of block acceptance as the source pool membership changed. The results are shown in Figure~\ref{figAcceptance}. We ran the computations for $300$ rounds. In the first $100$ rounds, the neighbors were selected from the whole network. That is, the complete network was the source pool. In the second $100$ rounds, $\lfloor 25\rfloor\%$ of miners were selected to be the source pool. Specifically, the source pool miners may connect to arbitrary neighbors, i.e. they have no connection restrictions. The remaining miners may connect only to non-source pool miners. In the remaining $100$, the restrictions were lifted and all miners formed the single source pool again. 

In a particular state of the computation, some block is accepted if it is in the longest branch of every miner. That is, every miner is mining on top of this block. 
The \emph{acceptance rate} is the ratio of accepted vs. generated blocks. We ran experiments for the network size of $100$, $250$ and $500$ miners. We did $10$ experiments per network size and averaged our results. 

The results indicate that, as the source pool size is restricted, the block acceptance rate declines. This is due to the source pool neighbors not receiving the blocks from non-source pool neighbors. The acceptance rate sharply rises as the the source pool is enlarged to incorporate all miners and long chains of blocks mined outside the source pool are propagated throughout the network. The acceptance rate is lower in the networks of larger size. Indeed, as more concurrent blocks are generated, fewer of them are accepted. 

In the next experiment, we observed how the neighborhood size affects the time it takes our algorithms to confirm the blocks. We implemented \emph{KSM} and \emph{KPT} and measured their confirmation time. The \emph{confirmation time} for a particular block is the number of rounds from the round when the block was generated till the round when the last miner outputs the confirmation decision. We counted confirmation time for accepted blocks only. We varied the maximum number of neighbors $mx$ and observed average confirmation time for \emph{KSM} and \emph{KPT}. The network size was $100$, the source pool was fixed at $75$ miners.

Algorithm \emph{KPT}, needs maximum propagation time \emph{PT} to be known in advance. 
To determine \emph{PT} we ran preliminary computations. For a fixed $mx$, we computed 
\emph{PT} by running $100$ computations with this $mx$ and computing the longest recorded propagation time. These preliminary computation lengths were set between $10,000$ and $15,000$ rounds. Then, for measurement computations, to collect sufficiently many confirmed blocks, we set computation lengths to $12\cdot PT$ rounds.  We ran $10$ experiments per data point. 

The results are shown in Figure~\ref{figMaxNeighbors}. They indicate that, as the maximum possible number of neighbors increases, the blocks are propagating faster and the confirmation time drops. Perhaps surprisingly, \emph{KSM} performed better because each miner can confirm a block as soon as it receives the data from all the known source pool miners, while, in \emph{KPT}, a miner has to wait for twice the maximum propagation time \emph{PT}. This holds even though we ran preliminary computations to select the shortest possible maximum propagation time \emph{PT}.  

\begin{figure}[htbp]
\vspace{-5mm}
\hspace{-3mm}
\begin{tabular}{c@{\hspace{3mm}}c}
\begin{minipage}[t]{0.51\textwidth}
   \includegraphics[width=\columnwidth]{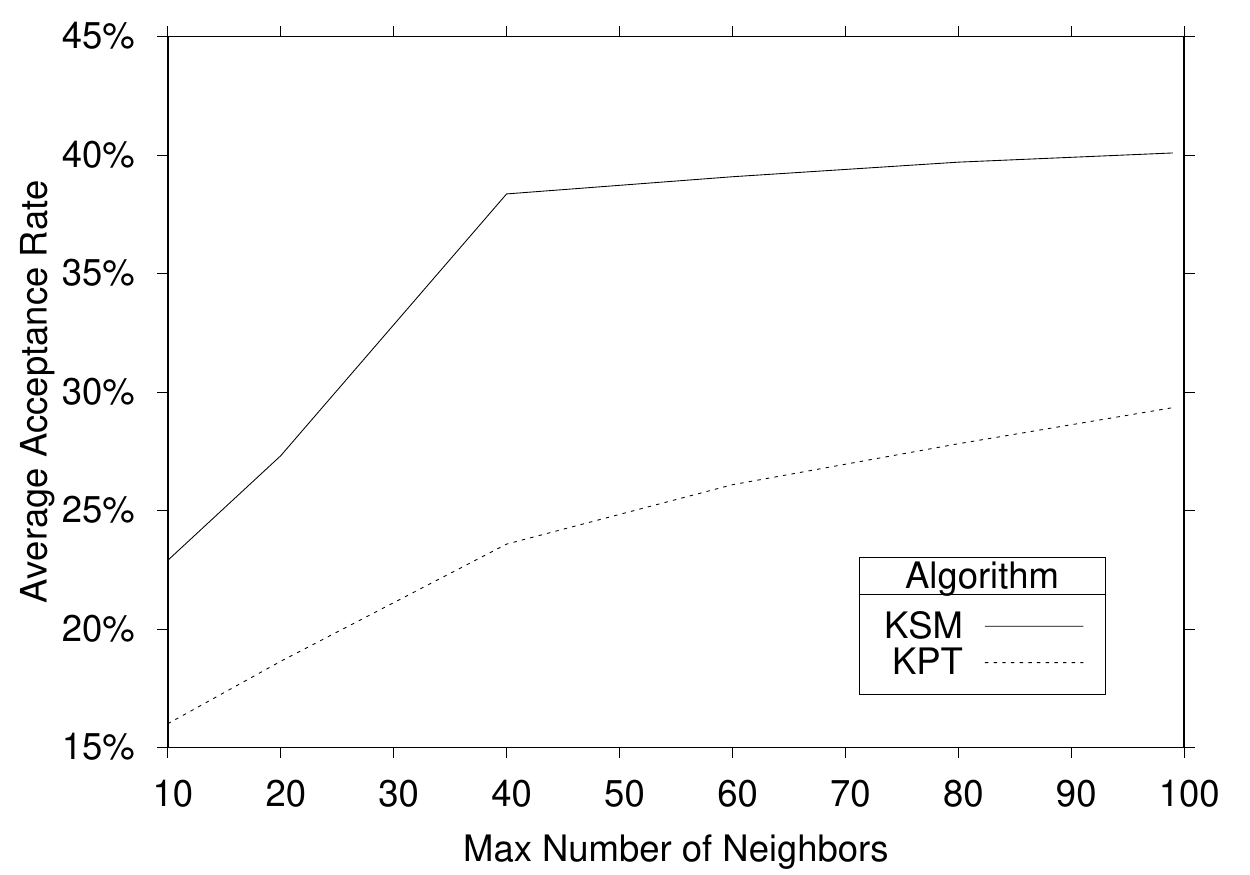}
   \vspace{-4mm}
   \caption{Acceptance Rate vs. maximum neighborhood size.}
    \label{figAlgorithmsAcceptance}
\end{minipage}
&
\begin{minipage}[t]{0.51\textwidth}
   \centering
   \includegraphics[width=\columnwidth]{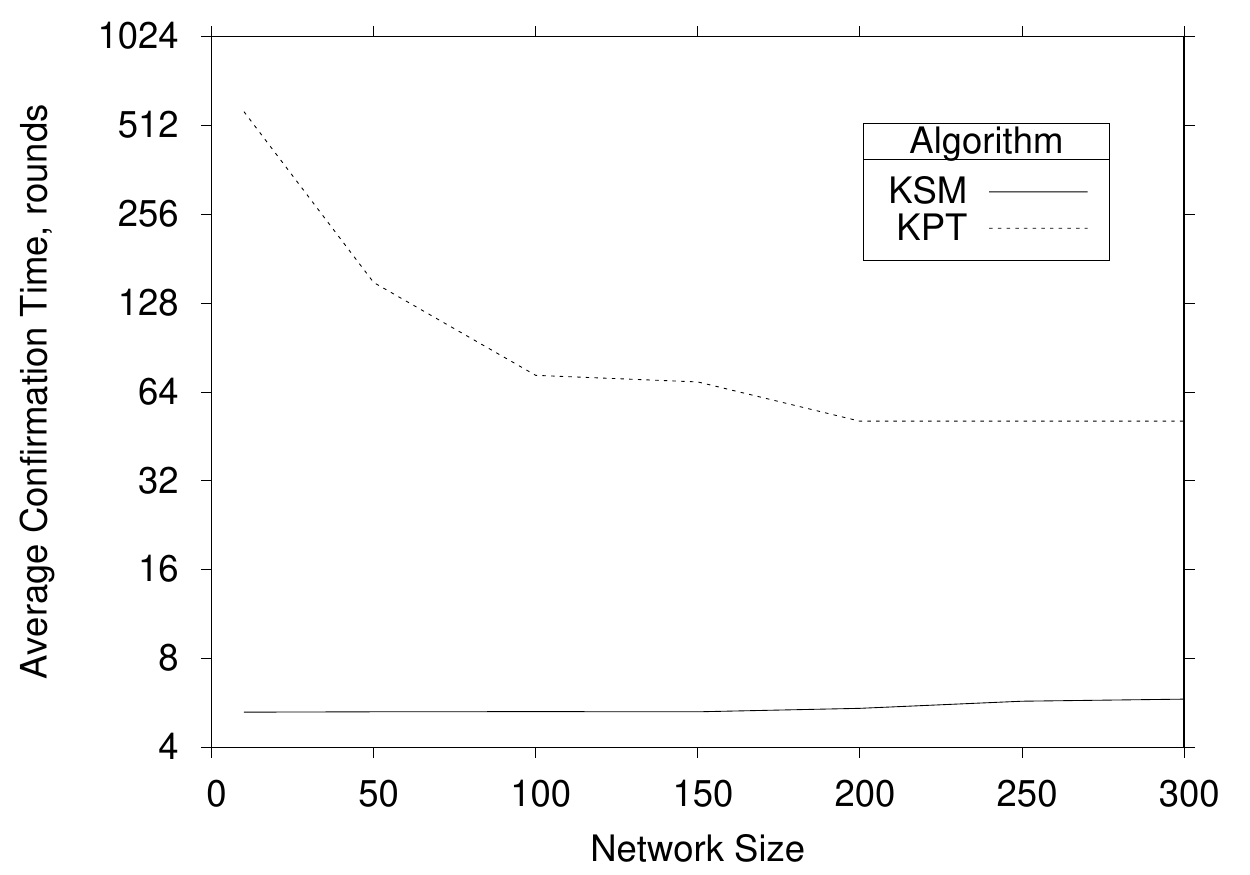}
   \vspace{-4mm}
   \caption{Confirmation time vs. network size.}
   \label{figScale}
\end{minipage}
\end{tabular}
\end{figure}

\vspace{-7mm}
For the same experiment, we computed average acceptance rate. We show the results in Figure~\ref{figAlgorithmsAcceptance}. Algorithm \emph{KSM} has lower confirmation time and, therefore, higher acceptance rate. 

In the final experiment, we observed the performance of the two algorithms as the network scale changes. The number of source pool members is fixed at $\lfloor 75\rfloor\%$ of the network size. The computation lengths were set to $12\cdot PT$ rounds. We ran $10$ computations per data point. The results are shown in Figure~\ref{figScale}. As the network scale increases, $mx$ increases also. This increases the number of potential journeys and decreases $PT$, which, in turn, decreases the confirmation time of \emph{KPT} that depends on $PT$. \emph{KSM} exhibits the opposite dynamics. With larger scale, the number of source pool miners increases also. This makes \emph{KSM} run slightly slower as every miner has to wait to hear from a greater number of source pool miners. 

Our performance evaluation shows that \emph{KSM} outperforms \emph{KPT} under all conditions. Therefore, \emph{KPT} should be considered only when the source pool membership is not available and \emph{KSM} is not implementable.
\vspace{-4mm}
\section{Conclusion}
\label{secEnd}
\vspace{-2mm}
In this paper, we studied the operation of blockchain in dynamic networks. We explored how blockchain behaves at the boundaries of connectivity: where message delay and miner participation is tenuous while connection and communication speeds vary greatly. We believe this contributes to the knowledge of blockchain as a construct and helps engineers to design blockchain for high-stress, uncertain communication environments. 

\vspace{-4mm}

\bibliographystyle{abbrv}
\bibliography{dnblockchain}

\end{document}